
\documentclass{article} 
\usepackage{nips14submit_e,times}
\usepackage{url}
\usepackage{graphicx} 
\usepackage{subfigure}
\usepackage{algorithm}
\usepackage{algorithmic}
\usepackage{hyperref}
\usepackage{amsmath,  amssymb, amsthm}

\usepackage{color}


\let\oldbibliography\thebibliography
\renewcommand{\thebibliography}[1]{\oldbibliography{#1}
\setlength{\itemsep}{1.0pt}}

\newcommand{\beq}{\vspace{0mm}\begin{equation}}
\newcommand{\eeq}{\vspace{0mm}\end{equation}}
\newcommand{\beqs}{\vspace{0mm}\begin{eqnarray}}
\newcommand{\eeqs}{\vspace{0mm}\end{eqnarray}}
\newcommand{\barr}{\begin{array}}
\newcommand{\earr}{\end{array}}

\newcommand{\Nmat}[0]{{{\bf N}}}

\newcommand{\mv}[0]{{\boldsymbol{m}}}
\newcommand{\nv}[0]{{\boldsymbol{n}}}

\newcommand{\rv}{\boldsymbol{r}}

\newcommand{\xv}{\boldsymbol{x}}

\newcommand{\zv}{\boldsymbol{z}}
\newcommand{\cdotv}{\boldsymbol{\cdot}}

\newcommand{\Thetamat}{\boldsymbol{\Theta}}

\newcommand{\Phimat}{\boldsymbol{\Phi}}

\newcommand{\thetav}{\boldsymbol{\theta}}

\newcommand{\phiv}{\boldsymbol{\phi}}

\newcommand{\E}{\mathbb{E}}

\newtheorem{thm}{Theorem} 

\newtheorem{lem}[thm]{Lemma}


\hyphenation{op-tical net-works semi-conduc-tor}

\title{
Beta-Negative Binomial Process and Exchangeable Random Partitions for Mixed-Membership Modeling
} 


\author{
Mingyuan Zhou\\ 
IROM Department, McCombs School of Business\\
The University of Texas at Austin,
Austin, TX 78712, USA\\
\texttt{mingyuan.zhou@mccombs.utexas.edu} \\
}

%

\nipsfinalcopy 

\begin{document}

\maketitle
\begin{abstract}

The beta-negative binomial process (BNBP), an integer-valued stochastic process, is employed to partition a count vector into a latent random count matrix. As the marginal probability distribution of the BNBP that governs the exchangeable random partitions of grouped data has not yet been developed, current inference for the BNBP has to truncate the number of atoms of the beta process. This paper introduces an exchangeable partition probability function to explicitly describe how the BNBP clusters  the data points of each group into a random number of exchangeable partitions, which are shared across all the  groups. A fully collapsed Gibbs sampler is developed for the BNBP, leading to a novel nonparametric Bayesian topic model  that is distinct from existing ones, with simple implementation, fast convergence, good mixing, and state-of-the-art predictive performance. 

\end{abstract}

\section{Introduction}

For mixture modeling, 
there is a wide selection of nonparametric Bayesian priors,  such as the Dirichlet process  \cite{ferguson73} and the more general family of normalized random measures with independent increments (NRMIs)  \cite{regazzini2003distributional,BeyondDP}.  Although a draw from an NRMI usually consists of countably infinite atoms that are impossible to instantiate in practice, 
one may transform the infinite-dimensional problem into a finite one by marginalizing out the NRMI.
For instance, it is well known that the marginalization of the Dirichlet process random probability measure under multinomial sampling leads to the Chinese restaurant process \cite{PolyaUrn,csp}. The general structure of the Chinese restaurant process  is broadened by \cite{csp} to the so called exchangeable partition probability function (EPPF) model, leading to fully collapsed inference and providing a unified view of the characteristics of various nonparametric Bayesian mixture-modeling priors.  
Despite significant progress on EPPF models in the past decade, their use in mixture modeling (clustering) is usually limited to 
 a single set of data points. 

Moving beyond mixture modeling of a single set, there has been significant recent interest in mixed-membership modeling, $i.e.$, mixture modeling of grouped data $\xv_1,\ldots,\xv_J$, where each group $\xv_j=\{x_{ji}\}_{i=1,m_j}$ consists of  $m_j$  data points that are exchangeable within the group. 
To cluster the $m_j$ data points in each group into a random, potentially unbounded  number of  partitions, which are exchangeable and shared across all the groups, 
is a much more challenging statistical problem. While the hierarchical Dirichlet process (HDP) \cite{HDP} is a popular choice, 
it is shown in \cite{NBP2012} that a wide variety of integer-valued stochastic processes, including the gamma-Poisson process \cite{lo1982bayesian,InfGaP}, 
 beta-negative binomial process (BNBP) \cite{BNBP_PFA_AISTATS2012,NBPJordan}, 
 and gamma-negative binomial process (GNBP),  can all be applied to mixed-membership modeling.  However, none of these stochastic processes are able to describe their marginal distributions that govern the exchangeable random partitions of grouped data.
Without these marginal distributions, the HDP exploits an alternative representation known as the Chinese restaurant franchise \cite{HDP} to derive collapsed inference, 
while fully collapsed inference is available for neither the BNBP nor the GNBP.

The EPPF provides a unified treatment to mixture modeling, but there is hardly a  unified treatment to mixed-membership modeling. 
As the first step to fill that gap, 
this paper thoroughly investigates the law of the BNBP  that governs its exchangeable random partitions of grouped data. 
As directly deriving the BNBP's EPPF for mixed-membership modeling is difficult, we first randomize the group sizes $\{m_j\}_j$ and derive the  joint distribution  of $\{m_j\}_j$ and their random partitions on a shared list of exchangeable clusters; we then derive the 
marginal distribution of the group-size count vector $\mv=(m_1,\ldots,m_J)^T$, and use Bayes' rule to further arrive at the BNBP's EPPF that describes the prior distribution of a latent column-exchangeable random count matrix, whose $j$th row sums to $m_j$. 

The general method to arrive at an EPPF for mixed-membership modeling using an integer-valued stochastic process is an important contribution. We make several additional contributions: 1) We derive a prediction rule  for the BNBP to simulate exchangeable random partitions of grouped data governed by its EPPF. 2) We construct a BNBP topic model, derive a fully collapsed Gibbs sampler that  analytically marginalizes out not only the topics and topic weights, 
 but also the infinite-dimensional 
beta process, and provide closed-form update equations for model parameters. 3) The 
straightforward to implement BNBP topic model sampling algorithm converges fast, mixes well, 
 and produces state-of-the-art predictive performance with a compact representation of the corpus.

\vspace{-1mm}
\subsection{Exchangeable Partition Probability Function}
\vspace{-1mm}
 Let  $\Pi_m=\{A_1,\ldots,A_l\}$ denote a random partition of the set $[m]=\{1,2,\ldots,m\}$, 
where there are~$l$ partitions and each element $i\in[m]$ belongs to one and only one set $A_k$ from $\Pi_m$. 
If $P(\Pi_m=\{A_1,\ldots,A_l\}|m)$ depends only on the number and sizes of the $A_k$'s, regardless of their order, then it is called an exchangeable partition probability function (EPPF) of $\Pi_m$. 
An EPPF of $\Pi_m$ is an EPPF of $\Pi:=(\Pi_1,\Pi_2,\ldots)$  if $P(\Pi_m|n)=P(\Pi_m|m)$ does not depend on $n$, where $P(\Pi_m|n)$ denotes the marginal partition probability for $[m]$ when it is known the sample size is $n$. Such a constraint can also be expressed as an addition rule for the EPPF \cite{csp}. In this paper,  the addition rule is not required and the proposed EPPF is allowed to be dependent on the group sizes   (or sample size if the number of groups is one).  Detailed discussions about sample size dependent EPPFs can be found in \cite{gNBP2014}. We generalize the work of \cite{gNBP2014} 
to model the partition of a count vector into a latent column-exchangeable  random count matrix.
A marginal sampler for $\sigma$-stable Poisson-Kigman mixture models (but not mixed-membership models) is proposed in \cite{lomeli2014marginal}, encompassing a large class of  random probability measures and their corresponding EPPFs of $\Pi$. Note that the BNBP is not within that class and both the BNBP's EPPF and prediction rule are dependent on the group sizes.  

\vspace{-1mm}
\subsection{Beta Process}
\vspace{-1mm}
The beta process  
$B\sim\mbox{BP}(c,B_0)$ is a completely random measure defined on the product space $[0,1]\times \Omega$, with a concentration parameter $c>0$ and a  finite and continuous base measure $B_0$ over a complete separable metric  space $\Omega$ \cite{Hjort,JordanBP} . 
We define the L\'evy measure  of the beta process   as
\beq\label{eq:BP_Levy}
\nu(dpd\omega)=p^{-1}(1-p)^{c-1} dp B_0(d\omega).
\eeq
A draw from  $B\sim\mbox{BP}(c,B_0)$ can be represented   as a countably infinite sum as $B=\sum_{k=1}^\infty p_k\delta_{\omega_k},~\omega_k\sim g_0,$ where $\gamma_0=B_0(\Omega)$ is the mass parameter and $g_0(d\omega)=B_0(d\omega)/\gamma_0$ is the base distribution. 
The beta process is unique in that the beta distribution is not infinitely divisible, and 
its measure  on a Borel set ${A\subset \Omega}$, expressed as $B(A)=\sum_{k:\omega_k\in A}p_k$, could be larger than one and hence clearly not a beta random variable. In this paper we will work with  $Q(A)=-\sum_{k:\omega_k\in A}\ln(1-p_k)$, defined as a logbeta random variable, to analyze model properties and derive closed-form Gibbs sampling update equations. We provide these details 
in the Appendix. 

\section{Exchangeable Cluster/Partition Probability Functions for the BNBP} 
The integer-valued beta-negative binomial process (BNBP) is defined as
\beq\label{eq:BNBP}
X_j|B\sim\mbox{NBP}(r_j,B),~B\sim\mbox{BP}(c,B_0),
\eeq
where for the $j$th group $r_j$ is the negative binomial   dispersion parameter  and $X_j|B\sim\mbox{NBP}(r_j,B)$ is a negative binomial process such that $X_j(A)=\sum_{k:\omega_k\in A} n_{jk},~n_{jk}\sim\mbox{NB}(r_j,p_k)$ for each Borel set $A\subset \Omega$.
The negative binomial distribution $n\sim\mbox{NB}(r,p)$ has probability mass function (PMF) $f_{N}(n)=\frac{\Gamma(n+r)}{n!\Gamma(r)}p^n(1-p)^r$ for $n\in\mathbb{Z}$, where $\mathbb{Z}=\{0,1,\ldots\}$.
Our definition of the BNBP follows those of \cite{BNBP_PFA_AISTATS2012,NBP2012,NBPJordan}, where for inference \cite{BNBP_PFA_AISTATS2012,NBP2012} used finite truncation and \cite{NBPJordan} used slice sampling. 
There are two recent papers \cite{NBIBP,NBP_CountMatrix} that both marginalize out the beta process from the negative binomial process, with the predictive structures of the BNBP described as the negative binomial Indian buffet process (IBP) \cite{NBIBP} and ``ice cream'' buffet process \cite{NBP_CountMatrix}, respectively. 
Both processes are also related to the ``multi-scoop'' IBP of~\cite{BNBP_PFA_AISTATS2012}, and they all generalize the binary-valued IBP \cite{IBP}. 
Different from these two papers on infinite random count matrices, this paper focuses on generating a latent column-exchangeable random count matrix, each row of which sums to a fixed observed integer.  
This paper generalizes the techniques developed  in \cite{NBP_CountMatrix,gNBP2014} to define an EPPF for mixed-membership modeling and derive  truncation-free fully collapsed inference.

The BNBP by nature is an integer-valued stochastic process as $X_j(A)$ is a random count for each Borel set $A\subset\Omega$. As the negative binomial process is also a gamma-Poisson mixture process,  we can augment (\ref{eq:BNBP}) 
as a beta-gamma-Poisson process 
as
\beq
X_j|\Theta_j \sim\mbox{PP}(\Theta_j),~\Theta_j|r_j,B\sim\Gamma\mbox{P}[r_j,B/(1-B)],~B\sim\mbox{BP}(c,B_0),\notag
\eeq
where
$X_j |\Theta_j \sim\mbox{PP}(\Theta_j)$ is a Poisson process such that $X_j(A)\sim\mbox{Pois}[\Theta_j(A)]$, and $\Theta_j|B\sim\Gamma\mbox{P}[r_j,B/(1-B)]$ is a gamma process such that $\Theta_j(A)= \sum_{k:\omega_k\in A}\theta_{jk}, ~\theta_{jk}\sim\mbox{Gamma}[r_j,p_k/(1-p_k)]$, for each  Borel set $A\subset \Omega$. The  mixed-membership-modeling potentials of the BNBP become clear under this augmented representation.  The Poisson process provides a bridge to link count modeling and mixture modeling \cite{NBP2012}, since $X_j \sim\mbox{PP}(\Theta_j)$ can be equivalently generated by first drawing a total random count $m_j:=X_j(\Omega)\sim\mbox{Pois}[\Theta_j(\Omega)]$ and then assigning this random count to disjoint Borel sets of $\Omega$ using a multinomial distribution.

\subsection{Exchangeable Cluster Probability Function} 
In mixed-membership modeling, the size of each group is observed rather being random, thus although
the BNBP's augmented representation is instructive, 
it is still unclear how exactly the integer-valued stochastic process leads to a  prior distribution on exchangeable random partitions of grouped data. 
The first step for us to arrive at such a prior distribution is to build a 
group  size dependent model 
that treats the number of data points to be clustered (partitioned) in each group as random. 
Below we will first  derive an exchangeable cluster probability function (ECPF) governed by the BNBP to describe the joint distribution of the random group sizes and their random partitions over a random, potentially unbounded number of exchangeable  clusters shared across all the groups. Later we will show how to derive  the EPPF from the ECPF using Bayes' rule. 

\begin{lem} \label{lem:GNBP}
Denote $\delta_{k}(z_{ji})$ as a unit point mass at $z_{ji}=k$, $n_{jk}=\sum_{i=1}^{m_j} \delta_k(z_{ji})$, and $X_j(A) = \sum_{k:\omega_k\in A} n_{jk}$ as the number of data points in group $j$ assigned to the atoms within the Borel set $A\subset \Omega$. The $X_j$'s  generated via the group size dependent mixed-membership model as
\beqs
&
z_{ji}\sim \sum_{k=1}^\infty \frac{\theta_{jk}}{\Theta_j(\Omega)}\delta_k,~m_j\sim\emph{\mbox{Pois}}[\Theta_j(\Omega)],\notag\\
&\label{BNBP1}\Theta_j\sim\Gamma\emph{\mbox{P}}[r_j, B/(1-B)],~B\sim\emph{\mbox{BP}}(c,B_0)
\eeqs
is equivalent in distribution to  the $X_j$'s generated from a 
BNBP as in (\ref{eq:BNBP}). 

\end{lem}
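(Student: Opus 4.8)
The plan is to prove the claimed equality of laws for $(X_1,\ldots,X_J)$ by conditioning on the beta process $B=\sum_{k=1}^\infty p_k\delta_{\omega_k}$. In both constructions the coordinate processes are conditionally independent given the single shared $B$: in (\ref{BNBP1}) because the gamma processes $\Theta_j$ are drawn independently given $B$ and $X_j$ is built from $\Theta_j$ alone, and in (\ref{eq:BNBP}) by the definition of the negative binomial process. Hence it suffices to show that, conditionally on $B$, the random measure $X_j$ produced by (\ref{BNBP1}) has the same law as $X_j\,|\,B\sim\mbox{NBP}(r_j,B)$, and then integrate over $B\sim\mbox{BP}(c,B_0)$.

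First I would recover the augmented beta--gamma--Poisson representation displayed just before the lemma, i.e.\ show that (\ref{BNBP1}) implies $X_j\,|\,\Theta_j\sim\mbox{PP}(\Theta_j)$. Conditionally on $B$, the weights $\theta_{jk}\sim\mbox{Gamma}[r_j,p_k/(1-p_k)]$ are independent, and $\Theta_j(\Omega)=\sum_k\theta_{jk}<\infty$ almost surely (this follows from $B(\Omega)=\sum_k p_k<\infty$ a.s.\ for the beta process under (\ref{eq:BP_Levy}), so that only finitely many $p_k$ stay bounded away from $0$), which makes $m_j\sim\mbox{Pois}[\Theta_j(\Omega)]$ well defined. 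Drawing $m_j$ and then assigning the $m_j$ points i.i.d.\ to atom $k$ with probability $\theta_{jk}/\Theta_j(\Omega)$ is the standard multinomial splitting of a Poisson count: the resulting counts $n_{jk}=\sum_{i=1}^{m_j}\delta_k(z_{ji})$ are mutually independent with $n_{jk}\sim\mbox{Pois}(\theta_{jk})$, and consequently for disjoint Borel sets $A_1,\ldots,A_n$ the totals $X_j(A_t)=\sum_{k:\omega_k\in A_t}n_{jk}$ are independent with $X_j(A_t)\sim\mbox{Pois}[\Theta_j(A_t)]$; that is exactly $X_j\,|\,\Theta_j\sim\mbox{PP}(\Theta_j)$.

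Second I would marginalize out $\Theta_j$ atom by atom. Since the $\theta_{jk}$ are independent given $B$, the $n_{jk}$ stay mutually independent after integration, and the Poisson--gamma mixture collapses to a negative binomial: $\int_0^\infty\mbox{Pois}(n_{jk};\theta)\,\mbox{Gamma}(\theta;r_j,p_k/(1-p_k))\,d\theta$ is $\mbox{NB}(r_j,p)$ with $p=\frac{p_k/(1-p_k)}{1+p_k/(1-p_k)}=p_k$, so $n_{jk}\sim\mbox{NB}(r_j,p_k)$. Therefore, conditionally on $B$, $X_j(A)=\sum_{k:\omega_k\in A}n_{jk}$ with independent $n_{jk}\sim\mbox{NB}(r_j,p_k)$, which is precisely the definition of $X_j\,|\,B\sim\mbox{NBP}(r_j,B)$ in (\ref{eq:BNBP}). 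Matching these conditional laws for each $j$ and integrating over $B$ gives the lemma; to be fully rigorous one checks agreement of the joint distributions of $(X_j(A_1),\ldots,X_j(A_n))$ over arbitrary finite collections of disjoint Borel sets and appeals to the uniqueness of the law of an integer-valued random measure.

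The two ingredients used above --- multinomial/Poisson splitting and the Poisson--gamma$\,\to\,$negative-binomial mixture --- are routine. The step I would treat as the main obstacle is the measure-theoretic bookkeeping over the countably infinite atom set: verifying $\Theta_j(\Omega)<\infty$ a.s.\ so the ``Poisson-then-multinomial'' device is legitimate, justifying the interchange of the infinite product over atoms with the marginalization of $\Theta_j$, and confirming that the two prescriptions agree as laws on the space of counting measures on $\Omega$ rather than only atomwise. These are exactly the computations for which the paper's logbeta variable $Q(A)=-\sum_{k:\omega_k\in A}\ln(1-p_k)$ and the associated Laplace-functional identities in the Appendix provide the clean route.
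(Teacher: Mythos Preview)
Your proposal is correct and follows essentially the same route as the paper: both establish the factorization $f(\zv_j, m_j\mid\Theta_j) = \frac{1}{m_j!}\prod_k \theta_{jk}^{n_{jk}}e^{-\theta_{jk}}$ via Poisson--multinomial splitting, then integrate out each $\theta_{jk}$ using the gamma--Poisson--to--negative-binomial identity to conclude $n_{jk}\sim\mbox{NB}(r_j,p_k)$ independently given $B$. The paper simply writes this out as an explicit density computation and does not dwell on the measure-theoretic points you flag (and the logbeta device is actually used for Lemma~\ref{lem:ECPF} and parameter inference, not for this lemma).
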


\begin{proof}
With $B=\sum_{k=1}^\infty p_k\delta_{\omega_k}$ and $\Theta_{j}=\sum_{k=1}^\infty\theta_{jk}\delta_{\omega_k}$, 
 the joint distribution of the cluster indices $\zv_j=(z_{j1},\ldots,z_{jm_j})$ given $\Theta_j$ and $m_j$ can be expressed as 
$
p(\zv_j|\Theta_j,m_j)= \prod_{i=1}^{m_j} \frac{\theta_{jz_{ji}}}{\sum_{k'=1}^\infty \theta_{jk'}} = \frac{1}{(\sum_{k'=1}^\infty \theta_{jk'})^{m_j}}\hspace{-.5mm}\prod_{k=1}^\infty \hspace{-.5mm}\theta_{jk}^{n_{jk}},\notag
$
which is not in a fully factorized form. As $m_j$ is linked to the total random mass $\Theta_j(\Omega)$ with a Poisson distribution, we have the joint likelihood of $\zv_j$ and $m_j$ given $\Theta_j$ as
\beqs\label{eq:zm}
&f(\zv_j,m_j|\Theta_j)=f(\zv_j|\Theta_j,m_j)\mbox{Pois}[m_j;\Theta_j(\Omega)] = \frac{1}{m_j!}\prod_{k=1}^\infty \theta_{jk}^{n_{jk}}e^{-\theta_{jk}}, 
\eeqs
which is fully factorized and hence amenable to 
marginalization. 
Since $\theta_{jk}\sim\mbox{Gamma}[r_j,p_k/(1-p_k)]$, we can marginalize $\theta_{jk}$ out analytically as $f(\zv_j,m_j|r_j,B)  
= \E_{\Theta_j}[f(\zv_j,m_j|\Theta_j)]$, leading to 
\beqs\label{eq:BNBP_Con_Likelihood}
f(\zv_j,m_j|r_j,B)  
&= \frac{1}{m_j!}\prod_{k=1}^\infty \frac{\Gamma(n_{jk}+r_j)}{\Gamma(r_j)}p_k^{n_{jk}}(1-p_k)^{r_j}. 
\eeqs
Multiplying the above equation with a  multinomial coefficient transforms the prior distribution for the categorical random variables $\zv_j$ to the prior distribution for a random count vector as 
 $ 
f(n_{j1},\ldots,n_{j\infty}|r_j,B)=
\frac{m_j!}{\prod_{k=1}^\infty n_{jk}!}f(\zv_j,m_j|r_j,B) 
=
\prod_{k=1}^\infty{\mbox{NB}}(n_{jk};r_j,p_k).\notag
$ 
Thus in the prior,  for each group, the group size dependent model in (\ref{BNBP1}) draws $n_{jk}\sim\mbox{NB}(r_j,p_k)$ random number of data points independently at each atom.
With $X_j:=\sum_{k=1}^\infty n_{jk}\delta_{\omega_k}$, we have $X_j|B\sim\mbox{NBP}(r_j,B)$ such that $X_j(A)= \sum_{k:\omega_k\in A} n_{jk}, ~n_{jk}\sim\mbox{NB}(r_j,p_k)$.
\end{proof}

The Lemma below provides a finite-dimensional distribution obtained by marginalizing out the infinite-dimensional beta process from the BNBP. 
The proof is provided in the Appendix.

\begin{lem}[ECPF]\label{lem:ECPF}
The exchangeable cluster probability function (ECPF) of the BNBP, which describes the joint distribution of the random count vector $\mv:=(m_1,\ldots,m_J)^T$ and its exchangeable random partitions $\zv:=(z_{11},\ldots,z_{Jm_J})$, can be expressed as
\beqs\label{eq:ECPF_BNBP}
&f(\zv,\mv|\rv,\gamma_0,c)  
 =\frac{\gamma_0^{K_J} e^{-\gamma_0\left[\psi(c+r_{\cdotv})-\psi(c)\right]}}{\prod_{j=1}^Jm_j!}  \prod_{k=1}^{K_J} \left[   \frac{\Gamma(n_{\cdotv k})\Gamma(c+ r_{\cdotv})}{\Gamma(c+n_{\cdotv k}+r_{\cdotv})}\prod_{j=1}^J  \frac{\Gamma(n_{jk}+r_j)}{\Gamma(r_j)}  \right],
\eeqs
where $K_J$ is the number of observed points of discontinuity for which $n_{\cdotv k}>0$, $\rv:=(r_1,\ldots,r_J)^T$, $r_{\cdotv} := \sum_{j=1}^J r_j$, $n_{\cdotv k}:=\sum_{j=1}^J n_{jk}$, and $m_j\in\mathbb{Z}$ is the random size of group $j$. 

\end{lem}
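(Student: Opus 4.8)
The plan is to start from the per-group conditional likelihood $f(\zv_j,m_j|r_j,B)$ of Lemma~\ref{lem:GNBP}, take the product over $j=1,\ldots,J$ (conditional independence given $B$), and then integrate out the beta process $B$. Concretely, from \eqref{eq:BNBP_Con_Likelihood} we have
\beq
f(\zv,\mv|\rv,B)=\prod_{j=1}^J\frac{1}{m_j!}\prod_{k=1}^\infty \frac{\Gamma(n_{jk}+r_j)}{\Gamma(r_j)}\,p_k^{n_{jk}}(1-p_k)^{r_j}
=\frac{1}{\prod_j m_j!}\prod_{k=1}^\infty \left[\prod_{j=1}^J\frac{\Gamma(n_{jk}+r_j)}{\Gamma(r_j)}\right] p_k^{n_{\cdotv k}}(1-p_k)^{r_{\cdotv}}.\notag
\eeq
So the whole dependence on $B$ sits in the factor $\prod_k p_k^{n_{\cdotv k}}(1-p_k)^{r_{\cdotv}}$, which only involves the atoms $p_k$ through their aggregate occupation counts $n_{\cdotv k}$. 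The atoms split into the $K_J$ ``active'' ones (with $n_{\cdotv k}>0$) and the infinitely many inactive ones (with $n_{\cdotv k}=0$), and I would handle these two populations separately.

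For the inactive atoms, $p^{n_{\cdotv k}}(1-p)^{r_{\cdotv}}=(1-p)^{r_{\cdotv}}$, and the total multiplicative contribution of all of them is $\E\big[\exp(-r_{\cdotv}\,Q(\Omega))\big]$ where $Q(\Omega)=-\sum_k\ln(1-p_k)$ is the logbeta random variable introduced in the Beta Process subsection; since $Q$ is an additive functional of the Poisson random measure with the L\'evy measure \eqref{eq:BP_Levy}, this Laplace functional is computed by the L\'evy--Khintchine / Campbell formula, giving $\exp\{-\gamma_0\int_0^1(1-(1-p)^{r_{\cdotv}})p^{-1}(1-p)^{c-1}dp\}=\exp\{-\gamma_0[\psi(c+r_{\cdotv})-\psi(c)]\}$, using the standard beta-integral identity $\int_0^1 (1-(1-p)^{r})p^{-1}(1-p)^{c-1}dp=\psi(c+r)-\psi(c)$. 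For the active atoms, I would use the Palm formula (equivalently, the marked-Poisson / Slivnyak description of completely random measures): conditioned on there being $K_J$ active atoms, their locations $\omega_k$ are i.i.d.\ from $g_0$ (contributing nothing since $n_{jk}$ depends only on cluster labels, not locations) and each of their weights $p_k$ contributes $\int_0^1 p^{n_{\cdotv k}}(1-p)^{r_{\cdotv}}\,\nu_p(dp)$ with $\nu_p(dp)=p^{-1}(1-p)^{c-1}dp$ and an accompanying factor $\gamma_0=B_0(\Omega)$ from the rate of the Poisson process of atoms. That integral is a Beta function: $\int_0^1 p^{n_{\cdotv k}-1}(1-p)^{c+r_{\cdotv}-1}dp=\frac{\Gamma(n_{\cdotv k})\Gamma(c+r_{\cdotv})}{\Gamma(c+n_{\cdotv k}+r_{\cdotv})}$. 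Multiplying the $K_J$ active-atom contributions, the $\gamma_0^{K_J}$ prefactor, the inactive-atom exponential, and the $k$-product of gamma ratios $\prod_j\Gamma(n_{jk}+r_j)/\Gamma(r_j)$ left over from \eqref{eq:BNBP_Con_Likelihood}, and dividing by $\prod_j m_j!$, yields exactly \eqref{eq:ECPF_BNBP}.

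The main obstacle is making the ``separate the active from the inactive atoms and integrate'' step rigorous rather than heuristic: because a beta process draw has infinitely many atoms accumulating near $p=0$, one cannot just ``condition on $K_J$ atoms'' naively. The clean way is to invoke the characterization of the negative-binomial (or gamma-Poisson) process over a completely random measure via its Poisson-process representation on $[0,1]\times\Omega$, and to observe that the pair (atom weight $p_k$, marks $\{n_{jk}\}_j$) forms a marked Poisson process; the joint law of the configuration of atoms with at least one nonzero mark, together with the marginal over atoms with all-zero marks, follows from the restriction and thinning properties of Poisson processes plus the Campbell/Mecke formula. I expect this bookkeeping — tracking the $\gamma_0$ normalization, the $g_0$ integration that trivially gives $1$, and the exchangeability that lets us replace the label-specific product by the size-only statistics $n_{\cdotv k}$ and $K_J$ — to be where the care is needed, while the two one-dimensional integrals over $p$ are routine. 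These details are what the excerpt defers to the Appendix, so I would organize the write-up as: (i) conditional likelihood product; (ii) marked-Poisson representation; (iii) Campbell formula for inactive atoms; (iv) Mecke/Palm formula for active atoms; (v) collect terms.
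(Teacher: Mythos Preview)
Your proposal is correct and matches the paper's proof essentially step for step: the paper also takes the product of \eqref{eq:BNBP_Con_Likelihood} over $j$, splits the atoms into the active set $\mathcal{D}_J$ and its complement, uses the Laplace transform of the logbeta variable $Q(\Omega\backslash\mathcal{D}_J)$ to produce the factor $e^{-\gamma_0[\psi(c+r_{\cdotv})-\psi(c)]}$, evaluates $\int_{[0,1]\times\Omega} p^{n_{\cdotv k}}(1-p)^{r_{\cdotv}}\nu(dp\,d\omega)=\gamma_0\,\Gamma(n_{\cdotv k})\Gamma(c+r_{\cdotv})/\Gamma(c+n_{\cdotv k}+r_{\cdotv})$ for each active atom, and cites the Palm formula to justify the marginalization. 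The only difference is that the paper is terser about the marked-Poisson bookkeeping you flag as the delicate step, and it writes the inactive-atom contribution as $Q(\Omega\backslash\mathcal{D}_J)$ rather than $Q(\Omega)$ (the two coincide under the Palm/Slivnyak argument you outline, so this is cosmetic).
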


\subsection{Exchangeable Partition Probability Function and Prediction Rule}
 Having the ECPF 
 does not directly lead to the EPPF for the BNBP, as an EPPF describes the distribution of the exchangeable random partitions of the data groups whose sizes are observed rather than being  random. 
To arrive at the EPPF, first we organize $\zv$ into a random count matrix $\Nmat_J\in\mathbb{Z}^{J\times K_J}$, whose $j$th row represents the partition of the $m_j$ data points into the $K_J$ shared exchangeable clusters and whose order of these $K_J$ nonzero columns is chosen uniformly at random from one of the $K_J!$ possible permutations, then we obtain a prior distribution on a BNBP random count matrix as
\beqs\label{eq:BNBP_Matrix}
&\hspace{-2.5cm}f(\Nmat_J|\rv,\gamma_0,c) =   \frac{1}{K_J!}\prod_{j=1}^J\frac{m_j!}{\prod_{k=1}^{K_J} n_{jk}!}f(\zv,\mv|\rv,\gamma_0,c)\notag\\
 &~~~~~~~~~~~~~~~~~~~~~~~~~~~~~~~=   
 \frac{\gamma_0^{K_J}e^{-\gamma_0\left[\psi(c+r_{\cdotv})-\psi(c)\right]}}{K_J!}
\prod_{k=1}^{K_J}  \frac{\Gamma(n_{\cdotv k})\Gamma(c+r_{\cdotv})}{\Gamma(c+n_{\cdotv k}+r_{\cdotv})}\prod_{j=1}^J  \frac{\Gamma(n_{jk}+r_j)}{n_{jk}!\Gamma(r_j)}.
\eeqs
As described in detail in \cite{NBP_CountMatrix}, although the matrix prior does not appear to be simple, direct calculation shows that this random count matrix has $K_J\sim  \mbox{Pois}\left\{ \gamma_0\left[\psi(c+r_{\cdotv})-\psi(c)\right]\right\}$ independent and identically distributed (i.i.d.) columns that can be generated via
\beqs\label{eq:BNBP_Matrix1}
&\nv_{:k}\sim\mbox{DirMult}(n_{\cdotv k}, r_1,\ldots,r_J),~~n_{\cdotv k} \sim \mbox{Digam}(r_{\cdotv}, c),
 \eeqs
where $\nv_{:k}:=(n_{1k},\ldots,n_{Jk})^T$ is the count vector for the $k$th column (cluster), the Dirichlet-multinomial (DirMult) distribution \cite{ 
madsen2005modeling} has 
PMF
$
\mbox{DirMult}(\nv_{:k}|n_{\cdotv k}, \rv) =  \frac{n_{\cdotv k}!}{\prod_{j=1}^Jn_{jk}!} \frac{\Gamma(r_{\cdotv})}{\Gamma(n_{\cdotv k}+r_{\cdotv})} {\prod_{j=1}^J  \frac{\Gamma(n_{jk}+r_j)}{\Gamma(r_j)}},\notag
$
and the digamma distribution \cite{digamma} has PMF
$ 
\mbox{Digam}(n|r,c) = \frac{1 }{\psi(c+r)-\psi(c)}   \frac{\Gamma(r+n)\Gamma(c+r)}{n\Gamma(c+n+r)\Gamma(r)},
$ 
where $n=1,2,\ldots$. Thus in the prior, the BNBP generates a Poisson random number of clusters, the size of each cluster follows a digamma distribution,  and each cluster is further partitioned into the $J$ groups using a Dirichlet-multinomial distribution \cite{NBP_CountMatrix}.

 With both the ECPF and random count matrix prior governed by the BNBP,  we are ready to derive both the EPPF and  prediction rule, given in the following two Lemmas, with proofs  in the Appendix. 
 
\begin{lem}[EPPF] \label{lem:EPPF_BNBP} Let $\sum_{\sum_{k=1}^K\nv_{:k} = \mv}$ denote a summation over all sets of count vectors with ${\sum_{k=1}^K\nv_{:k}=\mv}$, where 
$m_{\cdotv}=\sum_{j=1}^Jm_j$ and $n_{\cdotv k}\ge1$.
 The group-size dependent exchangeable partition probability function (EPPF) governed by the BNBP 
  can be expressed as
\begin{align}\label{eq:EPPF_BNBP}
&f(\zv|\mv,\rv,\gamma_0,c) 
 =\frac{\frac{\gamma_0^{K_J}}{\prod_{j=1}^Jm_j!}  \prod_{k=1}^{K_J} \left[   \frac{\Gamma(n_{\cdotv k})\Gamma(c+ r_{\cdotv})}{\Gamma(c+n_{\cdotv k}+r_{\cdotv})}\prod_{j=1}^J  \frac{\Gamma(n_{jk}+r_j)}{ \Gamma(r_j)}  \right]}{\sum_{K'=1}^{m_{\cdotv}} \frac{\gamma_0^{K'}}{K'!}\sum_{\sum_{k'=1}^{K'}\nv_{:k'} = \mv} \prod_{k'=1}^{K'}  \frac{\Gamma(n_{\cdotv k'})\Gamma(c+r_{\cdotv})}{\Gamma(c+n_{\cdotv k'}+r_{\cdotv})}\prod_{j=1}^J  \frac{\Gamma(n_{jk'}+r_j)}{n_{jk'}!\Gamma(r_j)}},
\end{align}
which is a function of the cluster sizes $\{n_{jk}\}_{k=1,K_J}$, regardless of the orders of the indices $k$'s. 

\end{lem}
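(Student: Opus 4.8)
The plan is to read off the EPPF from the ECPF of Lemma~\ref{lem:ECPF} by Bayes' rule. Under the group-size dependent model the sizes $\mv$ are random, so conditioning on them gives $f(\zv\,|\,\mv,\rv,\gamma_0,c)=f(\zv,\mv\,|\,\rv,\gamma_0,c)/f(\mv\,|\,\rv,\gamma_0,c)$, where the numerator is exactly the ECPF in (\ref{eq:ECPF_BNBP}). Hence the only substantive task is to evaluate the marginal $f(\mv\,|\,\rv,\gamma_0,c)=\sum_{\zv}f(\zv,\mv\,|\,\rv,\gamma_0,c)$, the sum running over all cluster-assignment vectors $\zv$ whose induced group sizes are $\mv$.

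Rather than summing the ECPF over $\zv$ directly, I would route the computation through the random count matrix prior (\ref{eq:BNBP_Matrix}). The row sums of $\Nmat_J$ are deterministically $\mv$, and (\ref{eq:BNBP_Matrix}) is a bona fide probability distribution over such count matrices --- its normalization being precisely the characterization that $K_J\sim\mbox{Pois}\{\gamma_0[\psi(c+r_{\cdotv})-\psi(c)]\}$ with i.i.d.\ columns drawn as in (\ref{eq:BNBP_Matrix1}). Consequently $f(\mv\,|\,\rv,\gamma_0,c)=\sum_{\Nmat_J:\ \text{row sums}=\mv}f(\Nmat_J\,|\,\rv,\gamma_0,c)$. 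Expanding this sum as a sum over the number of nonzero columns $K'$ (necessarily $1\le K'\le m_{\cdotv}$) followed by a sum over column configurations with $\sum_{k'=1}^{K'}\nv_{:k'}=\mv$ and $n_{\cdotv k'}\ge1$, and substituting the closed-form expression in (\ref{eq:BNBP_Matrix}), collects everything except the prefactor $e^{-\gamma_0[\psi(c+r_{\cdotv})-\psi(c)]}$ into exactly the denominator of (\ref{eq:EPPF_BNBP}); thus $f(\mv\,|\,\rv,\gamma_0,c)=e^{-\gamma_0[\psi(c+r_{\cdotv})-\psi(c)]}\,D$ with $D$ the denominator of (\ref{eq:EPPF_BNBP}).

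Forming the ratio, the common factor $e^{-\gamma_0[\psi(c+r_{\cdotv})-\psi(c)]}$ appearing in both (\ref{eq:ECPF_BNBP}) and $f(\mv\,|\,\rv,\gamma_0,c)$ cancels, leaving exactly (\ref{eq:EPPF_BNBP}); that the result depends on $\zv$ only through the cluster sizes $\{n_{jk}\}$ is inherited from the analogous property of the ECPF. I expect the main obstacle to be the bookkeeping in the marginalization step: one must fix the convention under which the denominator of (\ref{eq:EPPF_BNBP}) is written --- in particular whether its inner summation ranges over ordered tuples of nonzero column vectors (with $1/K'!$ playing the role of the Poisson normalizer) or over unordered collections --- and must justify that summing (\ref{eq:BNBP_Matrix}) over all matrices with the prescribed row sums genuinely recovers $f(\mv)$. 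The latter reduces to the normalization facts that the digamma--Dirichlet-multinomial column law of (\ref{eq:BNBP_Matrix1}) sums to one over all nonzero count vectors and that the resulting number of occupied columns is Poisson; granting these, the identity (\ref{eq:EPPF_BNBP}) follows essentially by inspection.
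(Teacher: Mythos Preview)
Your proposal is correct and matches the paper's own proof almost exactly: the paper also obtains (\ref{eq:EPPF_BNBP}) via Bayes' rule $f(\zv\,|\,\mv)=f(\zv,\mv)/f(\mv)$, with the numerator given by the ECPF (\ref{eq:ECPF_BNBP}) and the denominator computed from the compound-Poisson description in (\ref{eq:BNBP_Matrix1}), writing $f(\mv\,|\,\rv,\gamma_0,c)=\sum_{K}\mbox{Pois}(K;\gamma_0[\psi(c+r_{\cdotv})-\psi(c)])\sum_{\sum_k\nv_{:k}=\mv}\prod_k\mbox{DirMult}(\nv_{:k}\,|\,n_{\cdotv k},\rv)\,\mbox{Digam}(n_{\cdotv k}\,|\,r_{\cdotv},c)$ and then expanding. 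The only cosmetic difference is that you route the marginal through the matrix prior (\ref{eq:BNBP_Matrix}) before invoking (\ref{eq:BNBP_Matrix1}), whereas the paper invokes the Poisson--i.i.d.-column characterization directly; the bookkeeping concern you flag about ordered versus unordered columns is handled by the $1/K'!$ factor exactly as you anticipate.
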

Although the EPPF is fairly complicated, one may derive a simple prediction rule, as shown below, to simulate exchangeable random partitions of grouped data governed by this EPPF.

\begin{lem}[Prediction Rule]\label{lem:predict}
With $y^{-ji}$ representing the variable $y$ excluding  the contribution of $x_{ji}$, the prediction rule of the BNBP group size dependent mixed-membership model in (\ref{BNBP1}) is 
\begin{align} \label{eq:BNBP_predict}
P(z_{ji}=k|\zv^{-ji},\mv,\rv,\gamma_0,c) &\propto
\begin{cases}\vspace{.15cm}
\frac{n_{\cdotv k}^{-ji}} {c+n_{\cdotv k}^{-ji}+r_{\cdotv}}( n_{jk}^{-ji}+r_j)
, & \emph{\mbox{for  }}  k= 1,\ldots,K_J^{-ji} ; \\ 
\frac{\gamma_0}{c+ r_{\cdotv}}r_j, & \emph{\mbox{if  }}  k=K_J^{-ji}+1. \end{cases}
\end{align}

\end{lem}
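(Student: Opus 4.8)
The plan is to read off the conditional $P(z_{ji}=k\mid\zv^{-ji},\mv,\rv,\gamma_0,c)$ from the EPPF of Lemma~\ref{lem:EPPF_BNBP} by Bayes' rule. Since reassigning the single indicator $z_{ji}$ within group $j$ does not change the group-size vector $\mv$, and we condition on all other indicators, this conditional probability is, as a function of $k$, proportional to $f(\zv\mid\mv,\rv,\gamma_0,c)$ for the completed configuration $\zv=(\zv^{-ji},z_{ji}=k)$. The denominator in~(\ref{eq:EPPF_BNBP}) depends only on $(\mv,\rv,\gamma_0,c)$, and the factors $1/\prod_{j}m_j!$ and $e^{-\gamma_0[\psi(c+r_{\cdotv})-\psi(c)]}$ in the ECPF~(\ref{eq:ECPF_BNBP}) are likewise constant in $k$; hence it suffices to track how the product $\gamma_0^{K_J}\prod_{k}\big[\,\cdots\big]$ in~(\ref{eq:ECPF_BNBP}) changes as $z_{ji}$ ranges over the existing clusters $1,\dots,K_J^{-ji}$ and over a fresh cluster labelled $K_J^{-ji}+1$.

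For an existing cluster $k\le K_J^{-ji}$, setting $z_{ji}=k$ keeps $K_J=K_J^{-ji}$ and alters only the $k$th bracketed factor, through $n_{\cdotv k}=n_{\cdotv k}^{-ji}+1$ and $n_{jk}=n_{jk}^{-ji}+1$. Applying $\Gamma(t+1)=t\,\Gamma(t)$ to the three affected Gamma functions, namely $\Gamma(n_{\cdotv k})$ in the numerator, $\Gamma(c+n_{\cdotv k}+r_{\cdotv})$ in the denominator, and $\Gamma(n_{jk}+r_j)$ for group $j$, with all remaining terms cancelling (including $\Gamma(c+r_{\cdotv})$ and the $\Gamma(r_j)$ factors), the change in this factor is $n_{\cdotv k}^{-ji}(n_{jk}^{-ji}+r_j)/(c+n_{\cdotv k}^{-ji}+r_{\cdotv})$, which is the first branch of~(\ref{eq:BNBP_predict}). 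For the fresh cluster $k=K_J^{-ji}+1$, the exponent of $\gamma_0$ rises by one and one new bracketed factor appears with $n_{\cdotv k}=1$, $n_{jk}=1$, and $n_{j'k}=0$ for $j'\neq j$; using $\Gamma(1)=1$, $\Gamma(c+r_{\cdotv})/\Gamma(c+1+r_{\cdotv})=1/(c+r_{\cdotv})$, $\Gamma(1+r_j)/\Gamma(r_j)=r_j$, and $\Gamma(r_{j'})/\Gamma(r_{j'})=1$ gives the factor $\gamma_0 r_j/(c+r_{\cdotv})$, the second branch. Normalizing over $k\in\{1,\dots,K_J^{-ji}+1\}$ yields~(\ref{eq:BNBP_predict}).

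I do not expect a genuine obstacle here; the work is mechanical, and the only subtleties are bookkeeping ones that should be spelled out. First, all quantities must be read relative to the reduced configuration $\zv^{-ji}$: if $x_{ji}$ was the only point in its cluster, that cluster is not counted in $K_J^{-ji}$, and the ``new cluster'' branch then simply re-creates it; column exchangeability of the count matrix (already invoked in passing from the ECPF to~(\ref{eq:BNBP_Matrix})) makes the particular column label irrelevant. Second, one should verify that the proportionality constant suppressed by ``$\propto$'' is truly independent of $k$, which is precisely the fact that the EPPF denominator in~(\ref{eq:EPPF_BNBP}) is a function of $(\mv,\rv,\gamma_0,c)$ only; this is what legitimizes working throughout with the ECPF numerator in place of $f(\zv\mid\mv,\rv,\gamma_0,c)$. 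With these two points noted, the two Gamma-ratio simplifications above finish the proof.
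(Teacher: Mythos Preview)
Your proposal is correct and follows essentially the same route as the paper: both derive the prediction rule by applying Bayes' rule to the ECPF~(\ref{eq:ECPF_BNBP}), isolating the factors that change when $z_{ji}$ is reassigned and reading off the two branches via the identity $\Gamma(t+1)=t\,\Gamma(t)$. Your brief detour through the EPPF before dropping its $k$-independent denominator is harmless, and the bookkeeping remarks you flag (removing an emptied cluster from $K_J^{-ji}$, column exchangeability) are exactly the subtleties one should note.
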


\subsection{Simulating Exchangeable Random Partitions of Grouped Data}

While the EPPF (\ref{eq:EPPF_BNBP}) is not simple, the prediction rule (\ref{eq:BNBP_predict}) clearly shows  that the probability of selecting $k$ is proportional to the product of two terms: one is related to the $k$th cluster's overall popularity across groups, and the other is only related to the $k$th cluster's popularity at that group and that group's dispersion parameter; and the probability of creating a new cluster is related to $\gamma_0$, $c$, $r_{\cdotv}$ and $r_j$.
The BNBP's exchangeable random partitions 
of the group-size count vector $\mv$, whose prior distribution is governed by (\ref{eq:EPPF_BNBP}), can be easily simulated via Gibbs sampling using  (\ref{eq:BNBP_predict}).

Running Gibbs sampling using  (\ref{eq:BNBP_predict}) for 2500 iterations and displaying the last sample, we show in Figure \ref{fig:BNBP_EPPF} (a)-(c) three distinct exchangeable random partitions  of the same group-size count vector, under three different parameter settings. 
It is clear that the dispersion parameters $\{r_j\}_j$ play a critical role in controlling how overdispersed  the counts are: the smaller the $\{r_j\}_j$ are, the more overdispersed  the counts in each row and those in each column are. This is unsurprising as in the BNBP's prior, we have $n_{jk}\sim\mbox{NB}(r_j,p_k)$ and $\nv_{:k}\sim\mbox{DirMult}(n_{\cdotv k}, r_1,\ldots,r_J)$. Figure \ref{fig:BNBP_EPPF}  suggests that it is  important to infer $r_j$ rather than setting them in a heuristic manner or fixing them. 

\begin{figure}[!tb]
\begin{center}
\includegraphics[width=0.80\columnwidth]{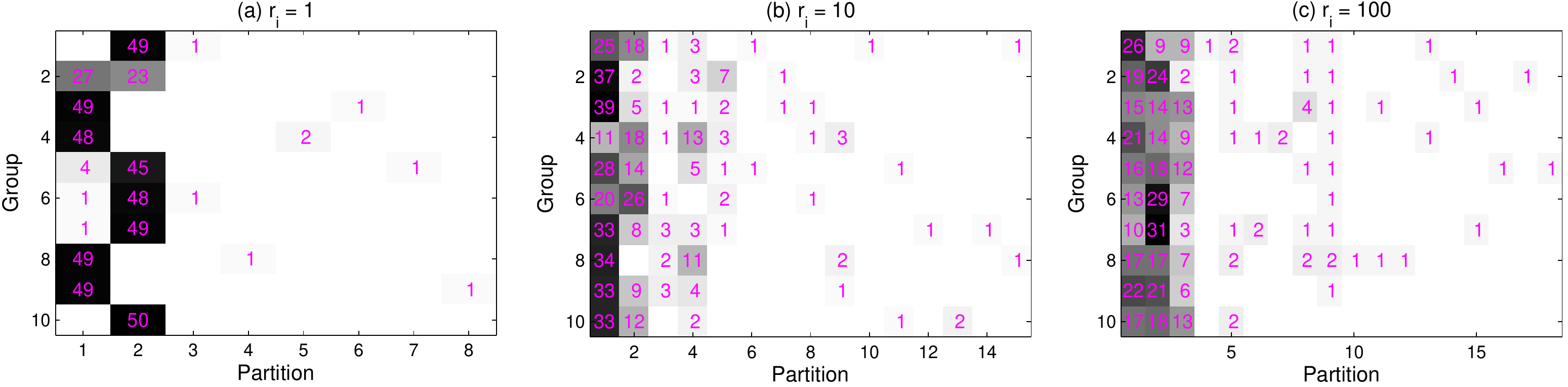}
\end{center}
\vspace{-4.5mm}
\caption{ \label{fig:BNBP_EPPF}  \small
Random draws from the EPPF that governs the BNBP's exchangeable random partitions of 10 groups (rows), each of which has 50 data points. The parameters of the EPPF are set as $c=2$, $\gamma_0=\frac{12}{\psi(c+\sum_jr_j)-\psi(c)}$, and (a) $r_j = 1$, (b) $r_j=10$, or (c) $r_j=100$ for all the 10 groups.
  The $j$th row of each matrix, which sums to 50,  represents the partition of the $m_j=50$ data points of the $j$th group over a random number of exchangeable clusters, and the $k$th column of each matrix represents the $k$th nonempty cluster in order of appearance in Gibbs sampling (the empty clusters are deleted). 
}\vspace{-3.5mm}
\end{figure}

\section{Beta-Negative Binomial Process Topic Model}

With the BNBP's EPPF  derived, it becomes evident that the integer-valued BNBP also governs a prior distribution for 
exchangeable random partitions of grouped data. 
To demonstrate the use of the BNBP, 
we apply it to topic modeling \cite{LDA} of a document corpus, a special case of mixture modeling of grouped data, where the words of the $j$th document $x_{j1},\ldots,x_{jm_j}$ constitute a group $\xv_j$  ($m_j$ words in document $j$), each word $x_{ji}$ is an exchangeable group member indexed by $v_{ji}$ in a vocabulary with $V$ unique terms. We choose the base distribution  as a $V$ dimensional Dirichlet distribution as
$
g_0(\phiv)=\mbox{Dir}(\phiv;\eta,\ldots,\eta)
$, and choose a  multinomial distribution to link a word to a topic (atom).  
We express the  hierarchical construction of the BNBP topic model as
\beqs
&x_{ji}\sim \mbox{Mult}(\phiv_{z_{ji}}),~\phiv_k\sim\mbox{Dir}(\eta,\ldots,\eta),~
z_{ji}\sim \sum_{k=1}^\infty \frac{\theta_{jk}}{\Theta_j(\Omega)}\delta_k,~m_j\sim{\mbox{Pois}}[\Theta_j(\Omega)],\notag\\
&\Theta_j\sim{\Gamma\mbox{P}}\big(r_j,\frac{B}{1-B}\big),~r_j\sim\mbox{Gamma}(a_0,1/b_0),~B\sim\mbox{BP}(c,B_0),~\gamma_0\sim\mbox{Gamma}(e_0,1/f_0).\label{eq:BNBP_TopicModel}
\eeqs
Let $n_{vjk}:=\sum_{i=1}^{m_j}\delta_{v}(x_{ji})\delta_{k}(z_{ji})$. Multiplying (\ref{eq:zm})
and the data likelihood
$
f(\xv_j|\zv_j,\Phimat) = \prod_{v=1}^V \prod_{k=1}^\infty (\phi_{vk})^{n_{vjk}},
$ where $\Phimat=(\phiv_1,\ldots,\phiv_\infty)$,
we have
$
f(\xv_j,\zv_j,m_j|\Phimat,\Theta_j) 
= \frac{\prod_{k=1}^\infty   \prod_{v=1}^V n_{vjk}!}{m_j!} \prod_{k=1}^\infty   \prod_{v=1}^V \mbox{Pois}(n_{vjk};\phi_{vk}\theta_{jk}).
$ 
Thus the BNBP  topic model can also be considered as an infinite Poisson factor model~\cite{BNBP_PFA_AISTATS2012}, where the term-document word count matrix $(m_{vj})_{v=1:V,~j=1:J}$ is factorized under the Poisson likelihood as
$ 
m_{vj} = \sum_{k=1}^\infty n_{vjk},~n_{vjk}\sim\mbox{Pois}(\phi_{vk}\theta_{jk}), \notag
$ 
whose likelihood $f(\{n_{vjk}\}_{v,k}|\Phimat,\Theta_j)$ is different from $f(\xv_j,\zv_j,m_j|\Phimat,\Theta_j)$ 
up to a multinomial coefficient. 

The full conditional likelihood $f(\xv,\zv,\mv|\Phimat,\Thetamat)=\prod_{j=1}^Jf(\xv_j,\zv_j,m_j|\Phimat,\Theta_j)$ can be further expressed as
$
f(\xv,\zv,\mv|\Phimat,\Thetamat)=
\left\{\prod_{k=1}^\infty \prod_{v=1}^V\phi_{vk}^{n_{ v \cdotv k}}\right\}  \cdotv \left\{\frac{  \prod_{k=1}^\infty\prod_{j=1}^J \theta_{jk}^{n_{jk}}e^{-\theta_{jk}}}{\prod_{j=1}^Jm_j!}\right\} ,\notag
$
where the  marginalization of $\Phimat$ from the first right-hand-side term  is the product of Dirichlet-multinomial distributions and the second right-hand-side   term  leads to the ECPF. Thus we have a fully marginalized likelihood as
$
f(\xv,\zv,\mv|\gamma_0,c,\rv)  
 = f(\zv,\mv|\gamma_0,c,\rv) \prod_{k=1}^{K_J} \left[\frac{\Gamma(V\eta)}{\Gamma(n_{\cdotv k}+V \eta)} {\prod_{v=1}^V  \frac{\Gamma(n_{v\cdotv k}+\eta)}{\Gamma(\eta)}}\right].
$
Directly applying Bayes' rule to this fully marginalized likelihood, we construct a nonparametric Bayesian fully collapsed Gibbs sampler for the BNBP topic model as
\begin{align}\label{eq:BNBP_TM}\small
\hspace{-1.6cm}P(z_{ji}=k|\xv, \zv^{-ji},\gamma_0,\mv,c,\rv) &\hspace{-0.5mm}\propto\hspace{-0.5mm}
\begin{cases} \vspace{0.15cm} \frac{\eta+n_{v_{ji}\cdotv k}^{-ji}}{V\eta+ n_{\cdotv k}^{-ji}}\cdotv 
 \frac{n_{\cdotv k}^{-ji}}{c+n_{\cdotv k}^{-ji}+r_{\cdotv}} \cdotv (n_{jk}^{-ji}+r_j)
, & \mbox{for  }  k=1,\ldots,K_J^{-ji} ;\hspace{-1cm} \\
\frac{1}{ V}\cdotv 
\frac{ \gamma_0}{c+ r_{\cdotv}}\cdotv r_j, 
& \mbox{if  }  k=K_J^{-ji}+1.\hspace{-1cm}
\end{cases}
\end{align}
In the Appendix we include all the other closed-form Gibbs sampling update equations.

\subsection{Comparison with Other Collapsed Gibbs Samplers}
One may compare the collapsed Gibbs sampler of the  BNBP topic model with that of latent Dirichlet allocation (LDA) \cite{FindSciTopic}, which, in our notation, can be expressed as
\beqs 
&P(z_{ji}=k|\xv, \zv^{-ji},\mv,\alpha,K) \propto
 \frac{\eta+n_{v_{ji}\cdotv k}^{-ji}}{V\eta+ n_{\cdotv k}^{-ji}}\cdotv (n_{jk}^{-ji}+\frac{\alpha}{K})
, ~~~ \mbox{for  }  k=1,\ldots,K,
\eeqs 
where the number of topics $K$ and the topic proportion Dirichlet smoothing parameter $\alpha$ are both tuning parameters. The BNBP topic model is a nonparametric Bayesian algorithm that removes the need to tune these parameters. One may also compare the BNBP topic model with the HDP-LDA \cite{HDP,VBHDP}, 
whose direct assignment sampler  in our notation can be expressed as
\begin{align}\label{eq:HDPLDA}
P(z_{ji}=k|\xv, \zv^{-ji},\mv,\alpha,\tilde{\rv}) &\propto
\begin{cases} \vspace{0.15cm} \frac{\eta+n_{v_{ji}\cdotv k}^{-ji}}{V\eta+ n_{\cdotv k}^{-ji}}\cdotv (n_{jk}^{-ji}+\alpha \tilde{r}_k)
, & \mbox{for  }  k=1,\ldots,K_J^{-ji} ; \\
\frac{1}{ V}\cdotv (\alpha \tilde{r}_{*}), 
& \mbox{if  }  k=K_J^{-ji}+1;
\end{cases}
\end{align}
where $\alpha$ is the concentration parameter for the group-specific Dirichlet processes $\widetilde{\Theta}_j\sim\mbox{DP}(\alpha,\widetilde{G})$, and  $\tilde{r}_k=\widetilde{G}(\omega_k)$ and $\tilde{r}_{*} = \widetilde{G}(\Omega\backslash\mathcal{D}_J)$ are the measures of the globally shared Dirichlet process $\widetilde{G}\sim\mbox{DP}(\gamma_0,\widetilde{G}_0)$ over the observed points of discontinuity and absolutely continuous space, respectively. 

Comparison between  (\ref{eq:HDPLDA}) and (\ref{eq:BNBP_TM}) shows that distinct from the HDP-LDA that combines a topic's global and local popularities in an additive manner as $(n_{jk}^{-ji}+\alpha \tilde{r}_k)$, the BNBP topic model combines them in a multiplicative manner as $\frac{n_{\cdotv k}^{-ji}}{c+n_{\cdotv k}^{-ji}+r_{\cdotv}} \cdotv (n_{jk}^{-ji}+r_j)$. This term can also be rewritten as the product of  $n_{\cdotv k}^{-ji}$ and $\frac{ n_{jk}^{-ji}+r_j}{c+n_{\cdotv k}^{-ji}+r_{\cdotv}} $, the latter of which represents how much the $j$th document contributes to the overall popularity of the $k$th topic.  Therefore, the BNBP and HDP-LDA have distinct mechanisms  to automatically shrink the number of topics.  
Note that while the BNBP sampler in (\ref{eq:BNBP_TM}) is fully collapsed, the direct assignment sampler of the HDP-LDA in (\ref{eq:HDPLDA}) is only partially collapsed as neither the globally shared Dirichlet process $\widetilde{G}$ nor the concentration parameter  $\alpha$ are marginalized out. To derive a collapsed sampler for the HDP-LDA that marginalizes out $\widetilde{G}$ (but still not $\alpha$), one has to use the Chinese restaurant franchise \cite{HDP}, which has cumbersome book-keeping as  
each word is indirectly linked to its topic via a latent table index. 

\section{Example Results}

\begin{figure}[!tb]
\begin{center}
\includegraphics[width=0.70\columnwidth]{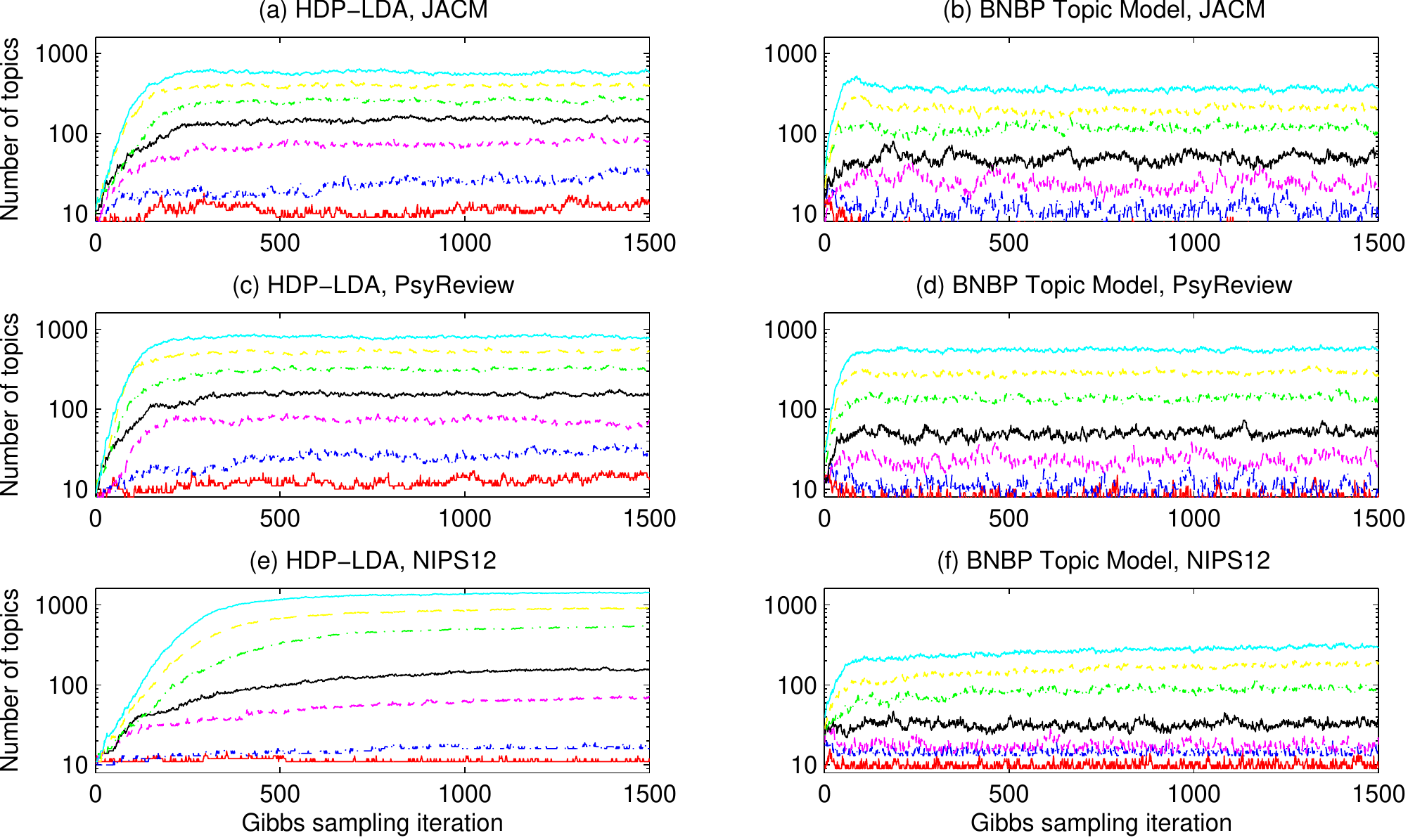}
\end{center}
\vspace{-4.6mm}
\caption{ \label{fig:BNBP_K}  \small
The inferred number of topics $K_J$ for the first 1500 Gibbs sampling iterations 
for the (a) HDP-LDA  and (b) BNBP topic model on JACM. 
(c)-(d) and (e)-(f) are analogous plots  to (a)-(c) for  the PsyReview and NIPS12 corpora, respectively.
From bottom to top in each plot, the red, blue, magenta, black, green, yellow, and cyan curves correspond to the results for  $\eta=0.50$,  $0.25$, $0.10$, $0.05$, $0.02$,  $0.01$,  and $0.005$, respectively. 
}\vspace{-4.9mm}
\end{figure}


We consider the JACM\footnote{\href{http://www.cs.princeton.edu/~blei/downloads/}{http://www.cs.princeton.edu/$\sim$blei/downloads/}}, 
PsyReview\footnote{\href{http://psiexp.ss.uci.edu/research/programs_data/toolbox.htm}{http://psiexp.ss.uci.edu/research/programs$\_$data/toolbox.htm}}, and NIPS12\footnote{\href{http://www.cs.nyu.edu/~roweis/data.html}{http://www.cs.nyu.edu/$\sim$roweis/data.html}}  corpora, restricting the vocabulary to terms that occur in five or more documents. The JACM corpus includes 536 documents, with $V=1539$ unique terms and 68,055 total word counts. The PsyReview corpus includes 1281 documents, 
with $V=2566$ and 71,279 total word counts. 
The NIPS12  corpus includes 1740 documents, with $V=13,649$ 
and 2,301,375 total word counts.  
%
%
To evaluate the BNBP topic model\footnote{Matlab code available in \href{http://mingyuanzhou.github.io/}{http://mingyuanzhou.github.io/}}
 and its performance relative to that of the HDP-LDA, which are both nonparametric Bayesian algorithms, we randomly choose 50\% of the words in each document as training, and use the remaining ones to calculate per-word heldout perplexity.   We set the hyperparameters as $a_0=b_0=e_0=f_0=0.01$.  We consider 2500 Gibbs sampling iterations and collect the last 1500 samples. In each iteration, we randomize the ordering of the words. For each collected sample, we draw the topics $(\phiv_k|-)\sim\mbox{Dir}(\eta+n_{1\cdotv k},\ldots,\eta+n_{J\cdotv k})$, and the topics weights  $(\theta_{jk}|-)\sim\mbox{Gamma}(n_{jk}+r_j,p_k)$ for the BNBP and topic proportions $(\thetav_k|-)\sim\mbox{Dir}(n_{j1}+\alpha\tilde{r}_1,\ldots,n_{jK_J}+\alpha\tilde{r}_{K_J})$ 
for the HDP, with which the per-word perplexity is computed as
$
\exp\Big(-\frac{1}{m^{\text{test}}_{\cdotv \cdotv}}\sum_{v} \sum_{j} m^{\text{test}}_{vj} \ln \frac{ \sum_{s}\sum_{k} \phi^{(s)}_{vk}\theta^{(s)}_{jk} }{\sum_{s}\sum_{v}\sum_{k} \phi^{(s)}_{vk}\theta^{(s)}_{jk}} \Big)
$, where $s\in\{1,\ldots,S\}$ is the index of a collected MCMC sample, $m^{\text{test}}_{vj}$ is the number of test words at term $v$ in document $j$, and $m^{\text{test}}_{\cdotv \cdotv}= \sum_v\sum_j m^{\text{test}}_{vj}$. The final results are averaged over five random training/testing partitions. The evaluation method is similar to those used in \cite{newman2009distributed,wallach09, DILN,BNBP_PFA_AISTATS2012}.
Similar to \cite{DILN,BNBP_PFA_AISTATS2012}, we set the topic Dirichlet smoothing parameter  as  $\eta=0.01$, $0.02$, $0.05$, $0.10$, $0.25$, or $0.50$. To test how the algorithms perform in more extreme settings, 
we also consider $\eta=0.001$, $0.002$, and $0.005$. 
All algorithms are implemented with unoptimized Matlab code. 
On a 3.4 GHz CPU, the fully collapsed Gibbs sampler of the BNBP topic model takes about 2.5 seconds per iteration on the NIPS12 corpus when the inferred number of topics is around 180.  The direct assignment sampler of the HDP-LDA has comparable computational complexity when the inferred number of topics is similar. 
Note that when the inferred number of topics $K_J$ is large, 
the sparse computation technique  for LDA \cite{porteous2008fast,mimno2012sparse}  may also be used to considerably speed up 
the sampling algorithm of the BNBP topic model. 


We first diagnose the convergence and mixing of the collapsed Gibbs samplers for the HDP-LDA and BNBP topic model via the trace plots of their samples.  The three plots in the left column of Figures \ref{fig:BNBP_K}  show that the HDP-LDA travels relatively slowly to the target distributions of the number of topics, often reaching them in more than 300 iterations, whereas the three plots in the right column show that the BNBP topic model travels quickly to the target distributions, usually reaching them in less than 100 iterations. Moreover, Figures \ref{fig:BNBP_K} shows that the chains of the HDP-LDA are taking in small steps and do not traverse their distributions quickly, whereas the chains of the BNBP topic models mix very well locally and traverse their distributions relatively quickly.

A smaller topic Dirichlet smoothing parameter $\eta$ generally supports a larger number of topics, as shown in the left column of Figure \ref{fig:BNBP_results},  and hence often leads to 
lower perplexities, as shown in the middle column of Figure \ref{fig:BNBP_results}; however, an $\eta$ that is as small as $0.001$ (not commonly used in practice) may lead to more than a thousand topics and consequently overfit the corpus, which is particularly evident for the HDP-LDA on both the JACM and PsyReview corpora. 
Similar trends are also likely to be observed on the larger NIPS2012 corpus if we allow the values of $\eta$ to be even smaller than $0.001$. 
As shown in  the middle column of Figure \ref{fig:BNBP_results},  for the same $\eta$, the BNBP topic model,  usually representing the corpus with a smaller number of topics, often have higher perplexities than those of the HDP-LDA, 
which  is unsurprising as the BNBP topic model has a multiplicative control mechanism to more strongly shrink the number of topics, whereas the HDP has a softer additive shrinkage mechanism. Similar performance differences 
have  also been observed in~\cite{NBP2012}, where the HDP and BNBP are inferred under finite approximations with truncated block Gibbs sampling. 
However, it does not necessarily mean that the HDP-LDA has better predictive performance than the BNBP topic model. In fact, 
as shown in the right column of Figure \ref{fig:BNBP_results}, the BNBP topic model's perplexity tends to be lower than that of the HDP-LDA if their  inferred number of topics are comparable and the $\eta$ is not overly small, which implies that the BNBP topic model is able to achieve the same predictive power as the HDP-LDA, but with a more compact representation of the corpus under common experimental settings.  While it is interesting to understand the ultimate potentials of the HDP-LDA and BNBP topic model for out-of-sample prediction by setting the $\eta$ to be very small, a moderate $\eta$ that supports a moderate number of topics is usually preferred in practice, for which 
the BNBP topic model could be a preferred choice over the HDP-LDA, as our experimental results on three different corpora all suggest that the BNBP topic model could achieve lower-perplexity using the same number of topics. To further understand why the BNBP topic model and HDP-LDA have distinct characteristics, one may view them  from a count-modeling perspective~\cite{NBP2012} and examine how they differently control the relationship between the variances and means of the latent topic usage count vectors $\{(n_{1k},\ldots,n_{Jk})\}_k$.  

We also find that the BNBP collapsed Gibbs sampler clearly outperforms the blocked Gibbs sampler of \cite{NBP2012} in terms of convergence speed, computational complexity and memory requirement. But a blocked Gibbs sampler based on finite truncation \cite{NBP2012} or adaptive truncation \cite{NBPJordan} does have a clear advantage that it is easy to  parallelize. The heuristics used to parallelize an HDP collapsed sampler \cite{newman2009distributed} may also be modified to parallelize the proposed  BNBP collapsed sampler. 


\begin{figure}[!tb]
\begin{center}
\includegraphics[width=0.70\columnwidth]{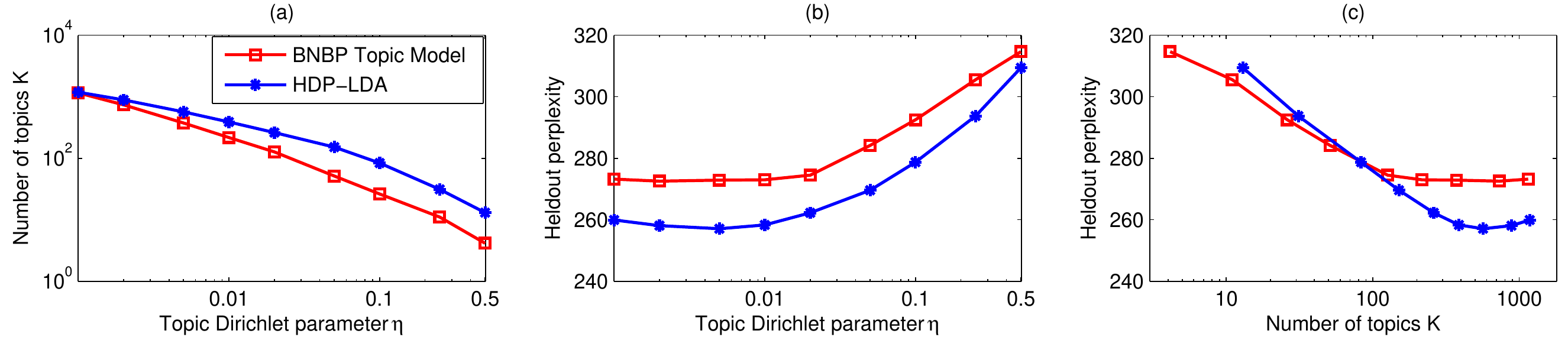}\\
\hspace{2mm}\includegraphics[width=0.71575\columnwidth]{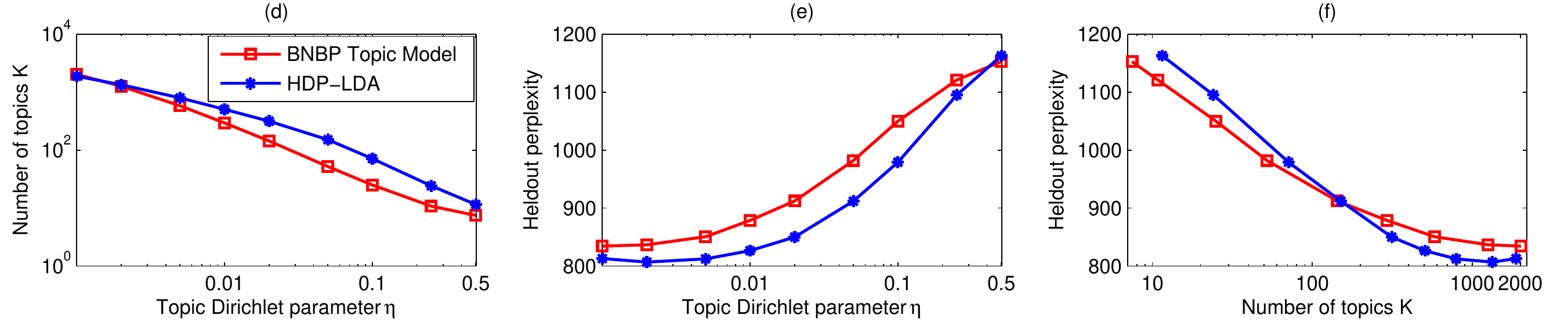}
\includegraphics[width=0.70\columnwidth]{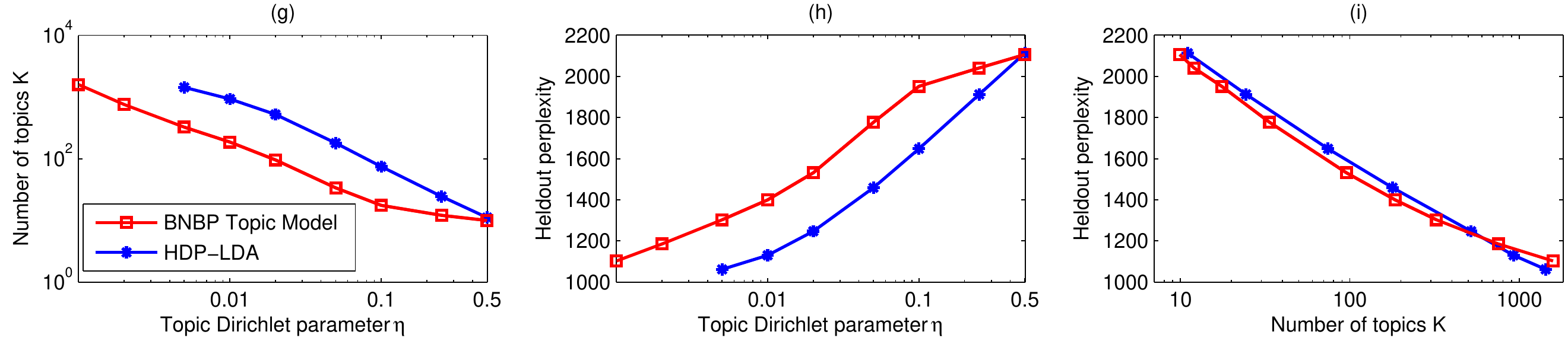}

\end{center}
\vspace{-4.5mm}
\caption{ \label{fig:BNBP_results}  \small
Comparison between the HDP-LDA and BNBP topic model with the topic Dirichlet smoothing parameter $\eta\in\{0.001,0.002,0.005,0.01,0.02,0.05,0.10,0.25,0.50\}$.  
For the JACM corpus: (a) the posterior mean of the inferred number of topics $K_J$ and (b) per-word heldout perplexity, both as a function of~$\eta$, and (c) per-word heldout perplexity as a function of the inferred number of topics $K_J$; the topic Dirichlet smoothing parameter $\eta$ and the number of topics $K_J$ are displayed in the logarithmic scale. (d)-(f) Analogous plots  to (a)-(c) for  the PsyReview  corpus. (g)-(i) Analogous plots  to (a)-(c) for  the NIPS12  corpus, where 
 the results of $\eta=0.002$ and $0.001$ for the HDP-LDA are omitted. 
}\vspace{-3.3mm}
\end{figure}

\vspace{-1mm}
\section{Conclusions}
\vspace{-1mm}
A group size dependent  exchangeable partition probability function (EPPF) for mixed-membership modeling is developed  using the integer-valued beta-negative binomial process (BNBP). 
The exchangeable random partitions of grouped data, governed by the EPPF of the BNBP, are strongly influenced by the group-specific dispersion parameters. 
We construct a BNBP nonparametric Bayesian topic model that is distinct from existing ones, intuitive to interpret, and straightforward to implement. The fully collapsed Gibbs sampler converges fast, mixes well, and has state-of-the-art predictive performance when a compact representation of the corpus is desired. The method  to derive the EPPF for the BNBP via   a group size dependent model is unique, and it is of interest to further investigate whether this method can be generalized to derive new EPPFs  for mixed-membership modeling that could be introduced by other integer-valued stochastic processes, including the gamma-Poisson  and gamma-negative binomial processes. 

\newpage


\bibliographystyle{unsrt}
\bibliography{References052014.bib}
%
\pagebreak

\normalsize

\begin{center}
\textbf{\large Appendix: Beta-Negative Binomial Process and Exchangeable Random Partitions for Mixed-Membership Modeling}
\end{center}
\setcounter{page}{1}

\subsection*{Logbeta Process}

Denoting a transformed representation of the beta process as $Q =-\sum_{k=1}^\infty \ln(1-p_k)\delta_{\omega_k}$, then for each $A\subset\Omega$, 
using the  L\'evy-Khintchine theorem and (\ref{eq:BP_Levy}), 
the Laplace transform of the random variable 
$Q(A)=-\sum_{k:\omega_k\in A}\ln(1-p_k)$ can be expressed as
\beqs\label{eq:LaplaceP}
\E[e^{-s Q(A)}] 
&=\exp\left\{{\int_{[0,1]\times A} \left[(1-p)^{s}-1\right]\nu(dpd\omega) }\right\} 
=\exp\Big\{{-B_0(A)\left[\psi(c+s)-\psi(c)\right]}\Big\},\notag
\eeqs
where $\psi(x)=\frac{\Gamma'(x)}{\Gamma(x)}$ is the digamma function with $\psi(c+s)-\psi(c)=\sum_{i=0}^\infty \left(\frac{1}{c+i}-\frac{1}{c+i+s} \right)$. Thus $Q(A)$ is an infinitely divisible random variable, which is defined as the logbeta random variable as
$
Q(A)\sim\mbox{logBeta}(B_0(A),c).\notag
$
We further define the associated completely random measure as the logbeta process $Q\sim\mbox{logBP}(B_0,c)$, with L\'evy measure 
$\nu(dqd\omega)=\frac{e^{-qc}}{1-e^{-q}}dqB_0(d\omega)$. The logbeta random variable is found to be useful to derive closed-form Gibbs sampling update equations 
for model parameters, 
as shown below. We mention that the logbeta process presented here is related to the beta-stacy process of \cite{walker1997beta1}.

\subsection*{Proof for Lemma \ref{lem:ECPF}}
By separating the atoms within the absolutely continuous space and the atoms with positive counts,  the conditional  likelihood of the BNBP group size dependent mixed-membership model, as shown in (\ref{eq:BNBP_Con_Likelihood}), can be rewritten as
\beqs 
&\hspace{-0.2cm}f(\zv,\mv|\rv,B)= \frac{1}{\prod_{j=1}^Jm_j!} \left\{ \prod_{k:n_{\cdotv k}=0} (1-p_k)^{r_{\cdotv}}\right\}\cdotv \left\{ 
\prod_{k:n_{\cdotv k}>0}  p_k^{n_{\cdotv k}}(1-p_k)^{r_{\cdotv}} \prod_{j=1}^J \frac{\Gamma(n_{jk}+r_j)}{\Gamma(r_j)} \right\}. \notag 
\eeqs
Let $\mathcal{D}_J:=\{\omega_k\}_{k:n_{\cdotv k}>0}$ denote the set of all observed atoms with positive counts, and let $K_J := |\mathcal{D}_J|$ denote its cardinality.  Our goal is to marginalize out the beta process $B$ to obtain the joint distribution of the cluster assignments $\zv$ and the group-size vector $\mv$.
Fixing an arbitrary labeling of the 
atoms in $\mathcal{D}_J$ from $1$ to $K_J$, we may further rewrite the joint conditional likelihood as 
\beqs\label{eq:Likelihood_BNBP2}
&f(\zv,\mv|\rv,B)=
\frac{1}{\prod_{j=1}^Jm_j!} e^{-Q(\Omega\backslash\mathcal{D}_J) r_{\cdotv}}
\prod_{k=1}^{K_J}  \left[p_k^{n_{\cdotv k}}(1-p_k)^{r_{\cdotv}} \prod_{j=1}^J \frac{\Gamma(n_{jk}+r_j)}{\Gamma(r_j)}\right],
\eeqs
where $Q(\Omega\backslash\mathcal{D}_J):=-\sum_{k:n_k=0}\ln(1-p_k)$ follows the $\mbox{logBeta}(\gamma_0,c)$ distribution in the prior. 
Since $\int_{[0,1]\times \Omega} p^{n}(1-p)^{r} \nu(dpd\omega) = \gamma_0 \frac{\Gamma(n)\Gamma(c+r)}{\Gamma(c+n+r)} $ and $\E_B[e^{-Q(\Omega\backslash\mathcal{D}_J) r_{\cdotv}}]= e^{-\gamma_0[ \psi(c+r_{\cdotv})-\psi(c)]}$, we may marginalize $B$ out  of (\ref{eq:Likelihood_BNBP2}) with the Palm formula \cite{james2002poisson1,bertoin2006random1,caron2012bayesian1}, 
leading to (\ref{eq:ECPF_BNBP}), which is 
a PMF that is only related to the cluster sizes, regardless of their orders.  Since the group sizes $\{m_j\}_j$ themselves are random, and the random cluster sizes $\{n_{jk}\}_k$  are exchangeable, 
we call (\ref{eq:ECPF_BNBP}) as the exchangeable cluster probability function (ECPF) of the BNBP group size dependent  mixed-membership model.  
\qed

\subsection*{Proof for Lemma \ref{lem:EPPF_BNBP}}
As the group-size count vector $\mv=(m_1,\ldots,m_J)^T$ can be generated as the summation of a Poisson random number of i.i.d. random count vectors, its PMF can be expressed as 
\small
\begin{align}
f(\mv|\rv,\gamma_0,c) &= \sum_{K=1}^{m_{\cdotv}} \mbox{Pois}\{K;\gamma_0\left[\psi(c+r_{\cdotv})-\psi(c)\right]\}\sum_{\sum_{k=1}^K\nv_{:k} = \mv} \prod_{k=1}^K \mbox{DirMult}(\nv_{:k}|n_{\cdotv k}, \rv)\mbox{Digam}(n_{\cdotv k}|r_{\cdotv},c) \notag\\
&=
 \sum_{K=1}^{m_{\cdotv}} \frac{\gamma_0^{K}e^{-\gamma_0\left[\psi(c+r_{\cdotv})-\psi(c)\right]}}{K!}
\sum_{\sum_{k=1}^K\nv_{:k} = \mv} \prod_{k=1}^{K}  \frac{\Gamma(n_{\cdotv k})\Gamma(c+r_{\cdotv})}{\Gamma(c+n_{\cdotv k}+r_{\cdotv})}\prod_{j=1}^J  \frac{\Gamma(n_{jk}+r_j)}{n_{jk}!\Gamma(r_j)}.\notag
\end{align}
\normalsize
Using the ECPF in (\ref{eq:ECPF_BNBP}) and the multivariate distribution of the group size vector $\mv$ shown above, 
the EPPF  in (\ref{eq:EPPF_BNBP}) directly follows  Bayes' rule as
$$ f(\zv|\mv,\rv,\gamma_0,c) = \frac{f(\zv,\mv|\rv,\gamma_0,c)}{f(\mv|\rv,\gamma_0,c)}.$$
\qed

\subsection*{Proof for Lemma \ref{lem:predict}}
One may rewrite the ECPF in (\ref{eq:ECPF_BNBP})  as
\begin{align}
f(z_{ji},\zv^{-ji},\mv|\rv,\gamma_0,c) &\hspace{-0mm}=  \frac{1}{\prod_{j=1}^Jm_j!} \gamma_0^{K_J^{-ji}} e^{-\gamma_0\left[\psi(c+r_{\cdotv})-\psi(c)\right]}  \left(\frac{\gamma_0 r_j}{c+ r_{\cdotv}}\right)^{\delta_{(K_J^{-ji}+1)}(z_{ji})}
 \notag\\
 &~~\hspace{-0cm}\times \prod_{k=1}^{K_J^{-ji}} \left[   \frac{\Gamma(n^{-ji}_{\cdotv k}+\delta_{k}(z_{ji}))\Gamma(c+ r_{\cdotv})}{\Gamma(c+n^{-ji}_{\cdotv k}+\delta_{k}(z_{ji})+r_{\cdotv})}\prod_{j}  \frac{\Gamma(n^{-ji}_{jk}+\delta_{k}(z_{ji}),+r_j)}{\Gamma(r_j)}  \right],\notag
\end{align}
which directly leads to (\ref{eq:BNBP_predict}) via
 Bayes' rule as
 \beq
P(z_{ji}|\zv^{-ji},\mv,\rv,\gamma_0,c) = \frac{f(z_{ji},\zv^{-ji},\mv|\rv,\gamma_0,c) }{\sum_{k=1}^{K_J^{-ji}+1 }f(z_{ji}=k,\zv^{-ji},\mv|\rv,\gamma_0,c) }.\notag
\eeq
\qed

\subsection*{Parameter Inference}

Using both the conditional likelihood (\ref{eq:BNBP_Con_Likelihood}) and marginal likelihood (\ref{eq:ECPF_BNBP}),  with the data augmentation and marginalization techniques for the negative binomial distribution 
in \cite{
MingyuanNIPS20121,
NBP20121}, we sample the model parameters as
\begin{align}\label{eq:BNBP_sampling}
&(\gamma_0|-)\sim\mbox{Gamma}\left(e_0+K_J,\frac{1}{f_0+\psi(c+r_{\cdotv})-\psi(c)}\right),\\
&(p_k|-)\sim \mbox{Beta}(n_{\cdotv k},c+r_{\cdotv}),~~(Q(\Omega\backslash\mathcal{D}_J)|-)\sim\mbox{logBeta}(\gamma_0 ,c+r_{\cdotv}),\\
&(l_{jk}| - ) 
=\sum_{t=1}^{n_{jk}}u_t,~u_t\sim\mbox{Bernoulli}\left(\frac{r_j}{r_j+t-1}\right),\\
&(r_j|-)\sim\mbox{Gamma}\left(a_0 + \sum_{k=1}^{K_J}l_{jk} ,\frac{1}{b_0+ Q(\Omega\backslash\mathcal{D}_J)-\sum_{k=1}^{K_J}\ln(1-p_k)} \right).
\end{align}
To draw from the logBeta distribution $x\sim\mbox{logBeta}(\gamma_0 ,c+r_{\cdotv})$, we use its Laplace transform \beq\label{eq:LaplaceP}
\E[e^{-s x}] 
=\exp\left\{{-\gamma_0\left[\psi(c+r_{\cdotv}+s)-\psi(c+r_{\cdotv})\right]}\right\}\notag
\eeq
 together with the random number generating technique developed in \cite{ridout2009generating1}.
The only parameter that we could not find  an analytic conditional posterior is the concentration parameter $c$,  
for which we  use the  griddy-Gibbs sampler \cite{griddygibbs1} to
sample from a discrete distribution
\beq\label{eq:BNBP_sampling_c}
(c|-)\propto f(\zv,m|\rv,\gamma_0,c)
\eeq
over a grid of points $\frac{1}{1+c}=0.01,0.02,\ldots,0.99$. Collapsed Gibbs sampling for the BNBP topic model is implemented by iteratively running (\ref{eq:BNBP_TM}) and (\ref{eq:BNBP_sampling})-(\ref{eq:BNBP_sampling_c}).
The direct assignment Gibbs sampler for the HDP-LDA is developed in \cite{HDP1}
and also described in detail in \cite{fox2011sticky1}.

\bibliographystyle{unsrt}

\end{document}